\begin{document}

\pagestyle{headings}  % switches on printing of running heads

\mainmatter              % start of the contributions

\title{Incremental Topological Ordering and Strong Component Maintenance}

\author{Bernhard Haeupler\inst{1} \and Siddhartha Sen\inst{1} \and Robert E. Tarjan\inst{1}\inst{,2}}

\institute{Princeton University, Princeton NJ 08544\\
\email{\{haeupler, sssix, ret\}@cs.princeton.edu}
\and
HP Laboratories, Palo Alto CA 94304 \\
\hspace{1pt} \\ 
\today}

\maketitle              % typeset the title of the contribution

\begin{abstract}

We present an on-line algorithm for maintaining a topological order of a directed
acyclic graph as arcs are added, and detecting a cycle when one is created.  Our
algorithm takes $O(m^{1/2})$ amortized time per arc, where $m$ is the total number of
arcs.  For sparse graphs, this bound improves the best previous bound by a
logarithmic factor and is tight to within a constant factor for a natural class of
algorithms that includes all the existing ones.  Our main insight is that the
bidirectional search method of previous algorithms does not require an ordered
search, but can be more general.  This allows us to avoid the use of heaps (priority
queues) entirely.  Instead, the deterministic version of our algorithm uses
(approximate) median-finding.  The randomized version of our algorithm avoids this
complication, making it very simple.  We extend our topological ordering algorithm to
give the first detailed algorithm for maintaining the strong components of a directed
graph, and a topological order of these components, as arcs are added.  This
extension also has an amortized time bound of $O(m^{1/2})$ per arc.

\end{abstract}

\section{Introduction} \label{sec:intro}

We consider three related problems on dynamic directed graphs: cycle detection;
maintaining a topological order; and maintaining strong components, along with a
topological order of them.  Cycle detection and maintaining a topological order are
closely connected, since a directed graph has a topological order if and only if it
is acyclic.  Previous work has focused mainly on the topological ordering problem;
the problem of maintaining strong components has received little attention.  We
present a solution to both problems that improves the best known time bound for
sparse graphs by a logarithmic factor.

A {\em topological order} $O$ of a directed graph is a total order of the vertices
such that for every arc $(v,w)$, $O(v) < O(w)$.  A directed graph has a topological
order (and in general more than one) if and only if it is acyclic \cite{Harary1965}.
A directed graph is {\em strongly connected} if, for each pair of vertices $v$ and
$w$, there is a path from $v$ to $w$.  The {\em strongly connected components} of a
directed graph are its maximal strongly connected induced subgraphs.  These
components partition the vertices. Given a directed graph, its {\em graph of strong
components} is the multigraph whose vertices are the components of the given graph,
with an arc $(c(v),c(w))$ for each arc $(v,w)$ of the original graph, where $c(x)$ is
the component containing vertex $x$. Ignoring loops, the graph of strong components
is acyclic; thus the components can be topologically ordered.

Given a fixed $n$-vertex, $m$-arc graph, a topological order can be found in $O(n +
m)$ time by either of two algorithms: repeated deletion of sources
\cite{Knuth1973,Knuth1974} or depth-first search \cite{Tarjan1972}.  The former
method extends to the enumeration of all possible topological orderings.  Strong
components, and a topological order of them, can also be found in $O(n + m)$ time by
depth-first search, either one-way \cite{Tarjan1972,Cheriyan1996,Gabow2000} or
two-way \cite{Aho1983,Sharir1981}.

In some applications, the graph is not fixed but changes over time.  The {\em
incremental topological ordering problem} is that of maintaining a topological order
of a directed graph as arcs are added, stopping when addition of an arc creates a
cycle.  This problem arises in incremental circuit evaluation \cite{Alpern1990},
pointer analysis \cite{Pearce2003}, management of compilation dependencies
\cite{Marchetti1993,Omohundro1992}, and deadlock detection \cite{Belik1990}.  In some
applications cycles are not fatal; strong components, and a topological order of
them, must be maintained.  For example, pointer analysis can optimize work based on
cyclic relationships \cite{Pearce2003b}.

In considering the incremental topological ordering and strong components problems,
we shall assume that the vertex set is fixed and specified initially, and that the
arc set is initially empty.  One can easily extend our ideas to support vertex as
well as arc additions, with a time of $O(1)$ per vertex addition.  The topological
ordering methods we consider also easily handle arc deletions, since deletion of an
arc cannot create a cycle nor invalidate the current topological order, but our time
bounds are only valid if there are no arc deletions.  Maintaining strong components
under arc deletions is a harder problem; we briefly discuss previous work on it in
Section \ref{sec:open}.

One could of course recompute a topological order, or strong components, after each
arc addition, but this would require $O(n + m)$ time per arc.  Our goal is to do
better.  The incremental topological ordering problem has received much attention,
especially recently.  Marchetti-Spaccamela et al. \cite{Marchetti1996} gave an
algorithm that takes $O(n)$ amortized time per arc addition.  Alpern et
al. \cite{Alpern1990} gave an algorithm that handles batched arc additions and has a
good time bound in an incremental model of computation.  Katriel and Bodlaender
\cite{Katriel2006} showed that a variant of the algorithm of Alpern et al. takes
$O(\min\{m^{1/2}\log n, m^{1/2} + (n^{2}\log n)/m\})$ amortized time per arc
addition. Liu and Chao \cite{Liu2007} tightened this analysis to $\Theta(m^{1/2} +
n^{1/2}\log n)$ per arc addition. Pearce and Kelly \cite{Pearce2006} gave an
algorithm with an inferior asymptotic time bound that they claimed was fast in
practice on sparse graphs.  Ajwani et al. \cite{Ajwani2006} gave an algorithm with an
amortized time per arc addition of $O(n^{2.75}/m)$, which for dense graphs is better
than the bound of Liu and Chao.  Finally, Kavitha and Mathew \cite{Kavitha2007}
improved the results of both Liu and Chao and Ajwani et al. by presenting another
variant of the algorithm of Alpern et al. with an amortized time bound per arc
addition of $O(m^{1/2} + (n\log n)/m^{1/2})$, and a different algorithm with an
amortized time bound per arc addition of $O(n^{2.5}/m)$.

The problem of maintaining strong components incrementally has received much less
attention.  F{\"a}hndrich et al. \cite{Fahndrich1998} gave an algorithm that searches
for the possible new strong component after each arc addition; their algorithm does
not maintain a topological order of components.  Pearce \cite{Pearce2005} and Pearce
et al. \cite{Pearce2003b} mention extending the topological ordering algorithm of
Pearce and Kelly and that of Marchetti-Spaccamela et al. to the strong components
problem, but they provide very few details.

We generalize the bidirectional search method introduced by Alpern et al. for
incremental topological ordering and used in later algorithms.  We identify the
properties of this method that allow older algorithms to achieve their respective
running times. We observe that the method does not require an ordered search (used in
all previous algorithms) to be correct.  This allows us to refine the general method
into one that avoids the use of heaps (priority queues), but instead uses either
approximate median-finding or random selection, resulting in an $O(m^{1/2})$ amortized
time bound per arc addition. The randomized version of our algorithm is especially
simple. We extend our incremental topological ordering algorithm to provide the first
detailed algorithm to maintain strong components and a topological order of them.
This algorithm also takes $O(m^{1/2})$ amortized time per arc addition.

The body of our paper is organized as follows.  Section \ref{sec:bi-search} describes
the bidirectional search method, verifies its correctness, and analyzes its running
time.  Section \ref{sec:c-search} refines the method to yield an algorithm fast for
sparse graphs.  Section \ref{sec:impl} provides some implementation details. Section
\ref{sec:strong} extends the algorithm to the problem of maintaining strong
components.  Section \ref{sec:open} examines lower bounds and other issues. In
particular, we show in this section that on sparse graphs our $O(m^{1/2})$ amortized
time bound is tight to within a constant factor for a natural class of algorithms
that includes all the existing ones.

\section{Topological Ordering via Bidirectional Search} \label{sec:bi-search}

We develop our topological ordering algorithm through refinement of a general method
that encompasses many of the older algorithms.  By vertex order we mean the current
topological order.  We maintain the vertex order using a data structure such that
testing the predicate ``$x < y$'' for two vertices $x$ and $y$ takes $O(1)$ time, as
does deleting a vertex from the order and reinserting it just before or just after
another vertex.  The dynamic ordered list structures of Dietz and Sleator
\cite{Dietz1987} and Bender et al. \cite{Bender2002} meet these requirements: their
structures take $O(1)$ time worst-case for an order query or a deletion, and $O(1)$
time for an insertion, amortized or worst-case depending on the structure.  For us an
amortized bound suffices. In addition to a topological order, we maintain the
outgoing and incoming arcs of each vertex.  This allows bidirectional search.
Initially the vertex order is arbitrary and all sets of outgoing and incoming arcs
are empty.

To add an arc $(v, w)$ to the graph, proceed as follows.  Add $(v, w)$ to the set of
arcs out of $v$ and to the set of arcs into $w$.  If $v > w$, do a bidirectional
search forward from $w$ and backward from $v$ until finding either a cycle or a set
of vertices whose reordering will restore topological order.

A vertex is {\em forward} if it is $w$ or has been reached from $w$ by a path of arcs
traversed forward, {\em backward} if it is $v$ or has been reached from $v$ by a path
of arcs traversed backward.  A vertex is {\em scanned} if it is forward and all its
outgoing arcs have been traversed, or it is backward and all its incoming arcs have
been traversed.  To do the search, traverse arcs forward from forward vertices and
backward from backward vertices until either a forward traversal reaches a backward
vertex $x$ or a backward traversal reaches a forward vertex $x$, in which case there
is a cycle, or until there is a vertex $s$ such that all forward vertices less than
$s$ and all backward vertices greater than $s$ are scanned.

In the former case, stop and report a cycle consisting of a path from $w$ to $x$
traversed forward, followed by a path from $x$ to $v$ traversed backward, followed by
$(v, w)$.  In the latter case, restore topological order as follows.  Let $X$ be the
set of forward vertices less than $s$ and $Y$ the set of backward vertices greater
than $s$.  Find topological orders of $O_X$ and $O_Y$ of the subgraphs induced by $X$
and $Y$, respectively.  Assume $s$ is not forward; the case of $s$ not backward is
symmetric.  Delete the vertices in $X \cup Y$ from the current vertex order and
reinsert them just after $s$, in order $O_Y$ followed by $O_X$.  (See Figure
\ref{fig:reorder}.)

\begin{figure}
\centering
%\vspace{2.5cm}
\includegraphics[scale=0.5]{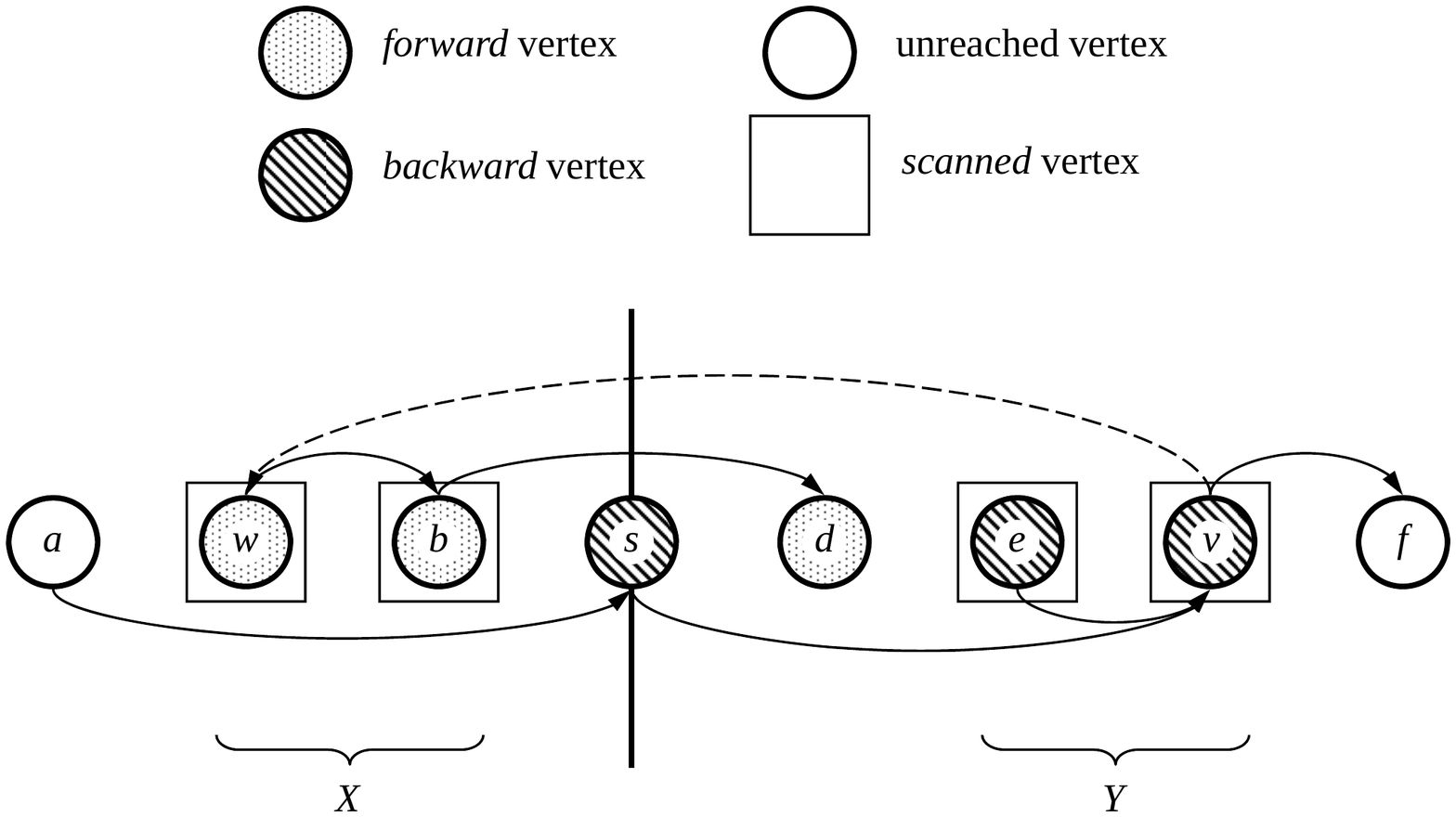} \\
\vspace{0.1in}
\includegraphics[scale=0.5]{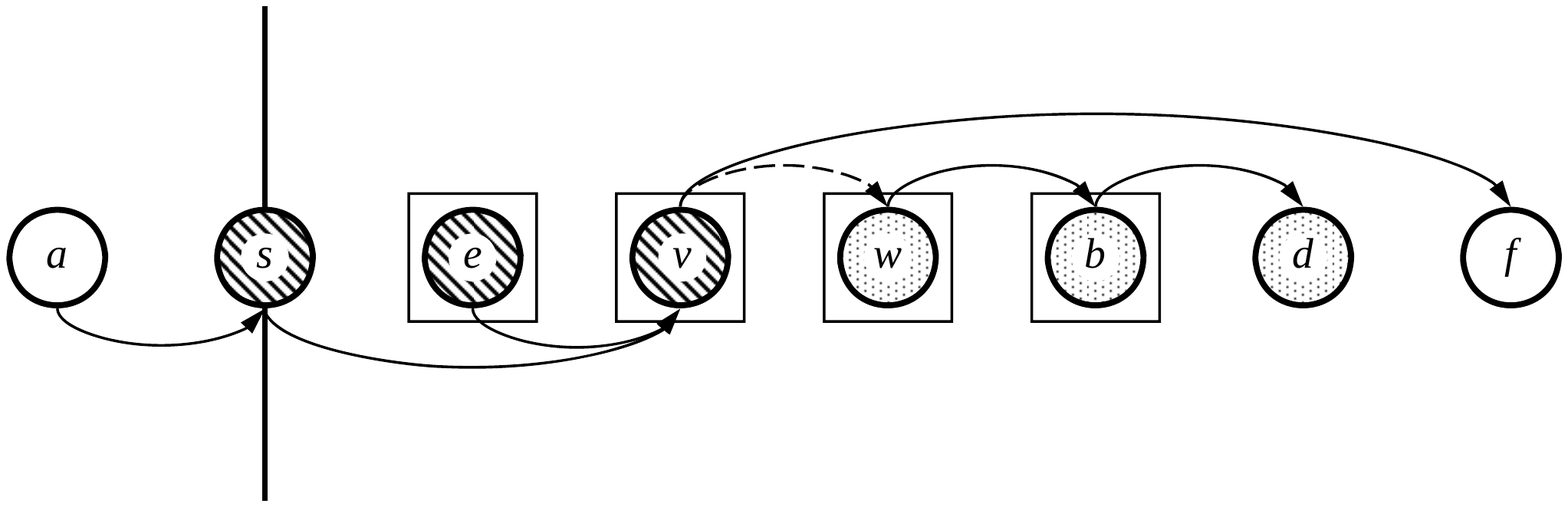} \\
\caption{Restoring topological order after bidirectional search. Since $s$ is not
forward, the vertices in $Y$ are inserted (in topological order) 
just after $s$, followed by the vertices in $X$.}
\label{fig:reorder}
\end{figure}

\begin{theorem} \label{thm:bi-search-cor}

The incremental topological ordering method is correct.

\end{theorem}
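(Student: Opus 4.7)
The plan is to induct on the sequence of arc additions, assuming the topological order was valid before $(v,w)$ was inserted. If $v<w$, the order is unchanged and there is nothing to do, so I restrict attention to the case $v>w$ and split into two subcases according to the outcome of the bidirectional search.

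If the search reports a cycle at a vertex $x$, I would simply exhibit the cycle: the forward path $w \to x$, the backward path $x \to v$, and the arc $(v,w)$ concatenate to a closed directed walk, hence a directed cycle. This case follows directly from the construction.

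The substantive case is when the search terminates at a separator $s$ and the algorithm reorders $X \cup Y$ just after $s$ (assuming $s$ is not forward; the other case is symmetric). I would first record four invariants that hold at termination: (a) no vertex is both forward and backward, since otherwise the search would have reported a cycle; (b) every forward vertex less than $s$ has all its outgoing arcs traversed, so its out-neighbors are forward; (c) every backward vertex greater than $s$ has all its incoming arcs traversed, so its in-neighbors are backward; and (d) $w \in X$ and $v \in Y$, which is guaranteed by the way the search selects $s$. From (a) and (b) together there is no arc from $X$ to $Y$ (such an arc would force a $Y$-vertex to be both forward and backward), arcs internal to $X$ and to $Y$ respect the new order by the choice of $O_X$ and $O_Y$, and the new arc $(v,w)$ points from $Y$ to $X$ and is therefore respected because $O_Y$ is inserted before $O_X$.

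The main obstacle, requiring careful case analysis, is the arcs with exactly one endpoint in $X \cup Y$. For these I plan to combine the invariants with the old topological order to pin down the outside endpoint's position relative to $s$: for an arc $(u,x)$ into $X$, $u$ cannot be forward (else $u \in X$), so $u < s$ in the old order and hence $u < s < x$ in the new one; for an arc $(x,u)$ out of $X$, invariant (b) forces $u$ to be forward, and since $u \notin X$ we get $u > s$, so $u$ remains above the reinserted block. The two $Y$-cases are symmetric, using (c) in place of (b). Since the reordering moves only vertices of $X \cup Y$ and places them contiguously immediately after $s$, the relative order of all unmoved vertices is preserved, and together with the case analysis this shows every arc obeys the new order, completing the induction.
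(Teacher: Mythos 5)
Your proof is correct and follows essentially the same approach as the paper's: induction on arc additions, the observation that a reported cycle really is a cycle, and a case analysis on the endpoints of each arc to show the reordered list is topological. There is one genuine structural difference worth noting, plus a few small imprecisions.

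The paper proves \emph{directly} that if adding $(v,w)$ creates a cycle $P$ then the search must report it: it takes the first non-forward vertex on $P$ and the last non-backward vertex, shows their predecessor/successor on $P$ are respectively a forward vertex and a backward vertex that cannot both be scanned on the correct sides of any $s$, and concludes the stopping condition can never hold. You instead prove only the one direction (reported $\Rightarrow$ exists) and then rely on the fact that your case analysis establishes a topological order whenever the search stops at a separator $s$, which forces the new graph to be acyclic and hence implies cycle-existence $\Rightarrow$ cycle-reported by contraposition. This works, because your case analysis never assumes the new graph is acyclic (it only uses acyclicity of the \emph{old} graph, via the induction hypothesis, to deduce positions relative to $s$ and to guarantee $O_X$ and $O_Y$ exist), but it silently needs two facts you do not state: (i) the search \emph{terminates}, so that ``does not report a cycle'' really does entail ``stops at some separator $s$'' (the paper points out that once all arcs are exhausted, the stopping condition holds for every $s$); and (ii) the only possible arc from $Y$ to $X$ is $(v,w)$ itself, which follows because any other such arc would violate the old topological order (you rule out $X\to Y$ but not $Y\to X$). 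Finally, invariant (d) as stated is slightly too strong: $v$ may coincide with $s$, so the correct claim is $v\in Y\cup\{s\}$ (and symmetrically $w\in X\cup\{s\}$ in the other case); the conclusion $v<w$ in the new order is unaffected since $Y\cup\{s\}$ precedes $X$ after reinsertion. With these small repairs the argument is complete, and the indirect route you take is a mild simplification over the paper, at the cost of making the cycle-detection guarantee less transparent.
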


\begin{proof}

We prove by induction on the number of arc additions that the method maintains a
topological order until it stops and reports a cycle.  Consider the bidirectional
search triggered by the addition of an arc $(v, w)$ with $v > w$.  The search must
stop because it eventually runs out of arcs to traverse; when this happens, all
forward and backward vertices are scanned, which means that the second stopping
condition holds for any $s$ (although it may hold earlier for some particular
$s$). If a forward arc traversal during the search reaches a backward vertex $x$ or a
backward arc traversal reaches a forward vertex $x$, then there is a cycle,
consisting of a path from $w$ to $x$ traversed forward, followed by a path from $x$
to $v$ traversed backward, followed by $(v, w)$.

Conversely, suppose that the addition of $(v, w)$ creates a cycle, consisting of $(v,
w)$ followed by a path $P$ from $w$ to $v$.  Consider the situation when the search
stops.  Let $x$ be the first non-forward vertex on $P$ and $y$ the last non-backward
vertex on $P$; both $x$ and $y$ exist since $v$ is backward (non-forward) and $w$ is
forward (non-backward).  Let $u$ be the vertex preceding $x$ on $P$ and $z$ the
vertex following $y$ on $P$.  Then $u$ is forward and $z$ is backward.  Traversal of
the arc $(u, x)$ causes either a cycle to be reported (if $x$ is backward) or $x$ to
become forward (if it is previously unreached).  The latter contradicts the choice of
$x$.  Thus either the search stops and reports a cycle, or $u$ is unscanned.
Symmetrically, either the search stops and reports a cycle or $z$ is unscanned.
Since $u < z$, for any $s$ either $u < s$ or $z > s$.  Thus the search cannot stop
without reporting a cycle.  We conclude that the method reports a cycle if and only
if the addition of $(v, w)$ creates one.

Suppose the addition of $(v, w)$ does not create a cycle.  Then the search cannot
report a cycle.  Thus, for some $s$, the search will stop with all forward vertices
less than $s$ and all backward vertices greater than $s$ scanned.  We need to show
that the new vertex order is topological, assuming that the old one was.  Assume $s$
is not forward; the case of $s$ not backward is symmetric. In the new order $v < w$,
since $v \in X$ and $w \in Y \cup \{s\}$.  Let $(x, y)$ be any arc other than $(v,
w)$.  We do a case analysis based on which of the sets $X$ and $Y$ contain $x$ and
$y$.  There are nine cases, but two are symmetric and two are impossible, reducing
the number of cases to six. \\

\noindent {\bf Case 1.} $\{x, y\} \subseteq X$ or $\{x, y\} \subseteq Y$: $x < y$ in
the new order since $x$ and $y$ are both in $O_X$ or both in $O_Y$.

\noindent {\bf Case 2.} $x \in X$ and $y \notin X$: $y \notin Y$ since the graph is
acyclic.  Since $x$ is scanned, $y > s$ in the old order, which implies $x < y$ in
the new one.

\noindent {\bf Case 3.} $x \in Y$ and $y \notin Y$: $y \notin X$ since $(x, y) \neq
(v, w)$. Since $x \in Y$, $y > s$ in the old order, which implies $x < y$ in the new
one.

\noindent {\bf Case 4.} $x \notin X \cup Y$ and $y \in X$: $x < y$ and $y < s$ in the
old order, which implies $x < y$ in the new one.

\noindent {\bf Case 5.} $x \notin X \cup Y$ and $y \in Y$: $x \le s$ in the old order
since $y$ is scanned and $x \notin Y$.  In the new order, $x \le s < y$.

\noindent {\bf Case 6.} $x \notin X \cup Y$ and $y \notin X \cup Y$: $x < y$ in both
the old and new orders.  \qed

\end{proof}

\begin{lemma} \label{lem:bi-search-run}

The time per arc addition is $O(1)$ plus $O(1)$ per arc traversed by the search
plus any overhead needed to guide the search.

\end{lemma}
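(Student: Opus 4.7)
The plan is to account separately for each phase of handling an arc addition: (i) the constant-time bookkeeping (appending $(v,w)$ to the adjacency sets, testing whether $v>w$ via the ordered list structure); (ii) the bidirectional search proper; (iii) computing the topological orders $O_X$ and $O_Y$ of the subgraphs induced by $X$ and $Y$; and (iv) the batch of deletions and reinsertions that repositions $X\cup Y$ just after $s$. Phase (i) is $O(1)$ by the properties of the Dietz–Sleator / Bender et al.\ order-maintenance structure, and phase (ii) is $O(1)$ per arc traversed plus the guidance overhead, by definition of the search. The substantive content of the lemma is that phases (iii) and (iv) add nothing beyond a constant per traversed arc.

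For phase (iv), I would invoke the order-maintenance structure: each delete and each insert-just-after costs $O(1)$, so the total cost is $O(|X\cup Y|)$. I then bound $|X\cup Y|$ as follows. Every vertex of $X\cup Y$ other than $v$ and $w$ first became forward (respectively backward) at the moment some arc was traversed into it from a forward vertex (respectively out of it to a backward vertex). Charging each such vertex to the arc that first reached it gives $|X\cup Y|\le 2+(\text{arcs traversed})$.

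For phase (iii), the key observation is that every arc of the subgraph induced by $X$ was actually traversed by the search. Indeed, if $(x,y)$ has both endpoints in $X$, then $x$ is a scanned forward vertex, so every outgoing arc of $x$ — in particular $(x,y)$ — was traversed forward. The symmetric statement holds for arcs induced on $Y$. Hence, if during the search we simply record, for each traversed arc, whether its other endpoint is forward (for forward traversals) or backward (for backward traversals), we can read off the two induced subgraphs in total size $O(|X\cup Y|+\text{arcs traversed}) = O(\text{arcs traversed})+O(1)$. Any linear-time topological-sort routine (e.g.\ the source-deletion method of Knuth or Tarjan's DFS-based method cited earlier) then produces $O_X$ and $O_Y$ within the same bound.

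Combining the four phases gives the claimed $O(1)$ plus $O(1)$ per traversed arc plus guidance overhead. The main obstacle, to the extent there is one, is the charging argument behind the bound $|X\cup Y|=O(1+\text{arcs traversed})$ and the observation that induced-subgraph arcs are exactly among the traversed arcs; once these are in hand, the rest is a routine appeal to the data-structure bounds already stated in the section.
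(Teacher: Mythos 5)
Your proof is correct and follows essentially the same decomposition as the paper's: bounding the search itself, observing that the subgraphs induced by $X$ and $Y$ contain only traversed arcs (hence linear-time static topological ordering suffices), and charging the $O(1)$-per-vertex delete/reinsert cost against the number of traversed arcs. You simply make explicit two charging arguments — why every induced arc was traversed, and why $|X\cup Y| = O(1 + \text{arcs traversed})$ — that the paper leaves implicit.
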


\begin{proof}

The time for the bidirectional search is $O(1)$ plus $O(1)$ per arc traversed plus
overhead.  The subgraphs induced by $X$ and $Y$ contain only traversed arcs.  The
time to topologically order them is linear since a static topological ordering
algorithm suffices.  The time to delete vertices in $X \cup Y$ from the old vertex
order and reinsert them is $O(1)$ per vertex in $X \cup Y$.  \qed

\end{proof}

To obtain a good time bound we need to minimize both the number of arcs traversed and
the search overhead.  In our discussion we shall assume that no cycle is created.
Only one arc addition, the last one, can create a cycle; the last search takes $O(m)$
time plus overhead.

We need a way to charge the search time against graph changes caused by arc
additions.  To measure such changes, we count pairs of related graph elements, either
vertex pairs, vertex-arc pairs, or arc-arc pairs: two such elements are {\em related}
if they are on a common path.  The number of related pairs is initially zero and
never decreases.  There are at most ${n \choose 2} < n^2/2$ vertex-vertex pairs, $nm$
vertex-arc pairs, and ${m \choose 2} < m^2/2$ arc-arc pairs.  Of most use to us are the
related arc-arc pairs.

We limit the search in three ways to make it more efficient.  First, we restrict it
to the {\em affected region}, the set of vertices between $w$ and $v$.  Specifically,
only arcs $(u, x)$ with $u < v$ are traversed forward, only arcs $(y, z)$ with $z >
w$ are traversed backward.  This suffices to attain an $O(n)$ amortized time bound
per arc addition.  The bound comes from a count of newly-related vertex-arc pairs:
each arc $(u, x)$ traversed forward is newly related to $v$, each arc $(y, z)$
traversed backward is newly related to $w$.  The algorithm of Marchetti-Spaccamela et
al. \cite{Marchetti1996} is the special case that does just a unidirectional search
forward from $w$ using $s = v$, with one refinement and one difference: it does a
depth-first search, and it maintains the topological order as an explicit mapping
between the vertices and the integers from $1$ to $n$.

Unidirectional search allows a more space-efficient graph representation, since we
need only forward incidence sets, not backward ones.  But bidirectional search has a
better time bound if it is suitably limited.  We make the search {\em balanced}: each
traversal step is of two arcs concurrently, one forward and one backward.  There are
other balancing strategies \cite{Alpern1990,Katriel2006,Kavitha2007}, but this simple
one is best for us.  Balancing by itself does not improve the time bound; we need a
third restriction.  We call an arc $(u, x)$ traversed forward and an arc $(y, z)$
traversed backward {\em compatible} if $u < z$.  Compatibility implies that $(u, x)$
and $(y, z)$ are newly related.  We make the search {\em compatible}: each traversal
step is of two compatible arcs.

\begin{lemma} \label{lem:c-search-arcs}

If the searches are compatible, the amortized number of arcs traversed during
searches is $O(m^{1/2})$ per arc addition.

\end{lemma}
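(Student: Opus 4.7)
The plan is a potential argument using the number of related arc-arc pairs (unordered pairs of arcs lying on a common directed path) as the budget. This count is monotone nondecreasing and bounded above by $\binom{m}{2}<m^2/2$, so if each compatible search of $t$ steps can be charged against $\Omega(t^2)$ pairs that become \emph{newly} related at the triggering addition, summing over all searches gives $\sum_\ell t_\ell^2=O(m^2)$, and Cauchy--Schwarz over the at most $m$ additions yields $\sum_\ell 2t_\ell=O(m^{3/2})$, i.e.\ amortized $O(m^{1/2})$ arcs traversed per addition.

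The first key step is to verify the paper's assertion: any compatible pair -- a forward-traversed arc $(u,x)$ and a backward-traversed arc $(y,z)$ with $u<z$ -- from a search that does not create a cycle is newly related by the addition of $(v,w)$. It is related \emph{now}, via the path $y\to z\to\cdots\to v\to w\to\cdots\to u\to x$ obtained by splicing the backward-search path from $z$ to $v$, the new arc $(v,w)$, and the forward-search path from $w$ to $u$ (both subpaths are in the old graph). It was \emph{not} related before, by case analysis on any hypothetical common old path, which must list the two arcs in topological order: if $(u,x)$ preceded $(y,z)$, the implied old subpath $x\to\cdots\to y$ concatenates with $w\to\cdots\to u$ and $z\to\cdots\to v$ to produce an old $w\to v$ path, which $(v,w)$ would close into a cycle, contradicting the hypothesis; if $(y,z)$ preceded $(u,x)$, the implied old subpath $z\to\cdots\to u$ forces $z\le u$ in the old topological order, contradicting $u<z$. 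I expect this case analysis to be the main obstacle, since the two subcases require different parts of the hypothesis (no-cycle versus compatibility).

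The second step is a purely combinatorial counting of compatible pairs inside a single search. Let the search perform $t$ steps with pairs $(f_k,b_k)=((u_k,x_k),(y_k,z_k))$ each satisfying $u_k<z_k$. Reindex the steps so that $z_1\le z_2\le\cdots\le z_t$; the matching $(u_k,z_k)$ is unchanged, so the step compatibilities persist. For every $i\le j$ we then have $u_i<z_i\le z_j$, so the cross-pair $(f_i,b_j)$ is itself compatible and, by the first step, newly related; this gives at least $t(t+1)/2$ newly related arc-arc pairs per search (distinct because $F\cap B=\emptyset$ in a non-cycle search, for otherwise a forward-and-backward arc would immediately produce an old $w\to v$ path). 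Since each arc-arc pair can become newly related in at most one addition, $\sum_\ell t_\ell(t_\ell+1)/2\le m^2/2$, whence $\sum_\ell t_\ell^2=O(m^2)$. Combined with there being at most $m$ searches, Cauchy--Schwarz gives $\sum_\ell t_\ell\le\sqrt{m\sum_\ell t_\ell^2}=O(m^{3/2})$, and since each search traverses $2t_\ell$ arcs, the amortized bound is $O(m^{1/2})$ per addition.
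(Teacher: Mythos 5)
Your proof is correct and follows essentially the same approach as the paper's: charge each compatible search against the $\Omega(t^2)$ arc-arc pairs it newly relates, then bound the total by $\binom{m}{2}$. The only cosmetic differences are that you sort the traversal steps by $z$ and count the full triangle (paper sorts forward arcs by $u$ and uses the median), you close the accounting with Cauchy--Schwarz rather than the paper's small/large split, and you spell out the ``compatible implies newly related'' claim---including the reverse case analysis ruling out an old common path---which the paper asserts in prose without proof.
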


\begin{proof} 

We count related arc-arc pairs.  Consider a compatible search of $2k$ arc traversals,
$k$ forward and $k$ backward.  Order the arcs $(u, x)$ traversed forward in
increasing order on $u$, breaking ties arbitrarily.  Let $(u, x)$ be the $\lceil k/2
\rceil^{th}$ arc in the order.  Arc $(u, x)$ and each arc following $(u, x)$ has a
compatible arc $(y, z)$ traversed backward.  Compatibility and the ordering of
forward traversed arcs imply that $u < z$.  Thus each such arc $(y, z)$ is newly
related to $(u, x)$ and to each arc preceding $(u, x)$, for a total of at least
$(k/2)^2$ newly related pairs.

We divide searches into two kinds: those that traverse at most $m^{1/2}$ arcs and
those that traverse more.  Searches of the first kind satisfy the bound of the lemma.
Let $2k_i$ be the number of arcs traversed during the $i^{th}$ search of the second
kind. Since $2k_i > m^{1/2}$ and $\sum_i (k_i/2)^2 < m^2/2$, $\sum_i k_i < 2\sum_i
{k_i}^2 / m^{1/2} = 8 \sum_i (k_i/2)^2 / m^{1/2} < 4m^{3/2}$. Thus there are
$O(m^{1/2})$ arc traversals per arc addition. \qed

\end{proof}

We still need a way to do a compatible search.  The most straightforward way is to
make the search ordered: traverse arcs $(u, x)$ forward in non-decreasing order on
$u$ and arcs $(y, z)$ backward in non-increasing order on $z$.  We can implement an
ordered search using two heaps (priority queues) to store unscanned forward and
unscanned backward vertices.  In essence this is the algorithm of Alpern et
al. \cite{Alpern1990}, although they use a different balancing strategy.  The heap
overhead is $O(\log n)$ per arc traversal, resulting in an amortized time bound of
$O(m^{1/2} \log n)$ per arc addition.  More-complicated balancing strategies lead to
the improvements \cite{Katriel2006,Liu2007,Kavitha2007} in this bound for non-dense
graphs mentioned in Section \ref{sec:intro}.

\section{Compatible Search via a Soft Threshold} \label{sec:c-search}

The running time of an ordered search can be reduced further, even for sparse graphs,
by using a faster heap implementation, such as those of van Emde Boas
\cite{Boas1977,Boas1977b}, Thorup \cite{Thorup2004}, and Han and Thorup
\cite{Han2002}.  But we can do even better, avoiding the use of heaps entirely, by
exploiting the flexibility of compatible search.  What we need is a way to find a
pair of candidate vertices for a traversal step: an unscanned forward vertex $u$ and
an unscanned backward vertex $z$ such that $u < z$.  It is easy to keep track of
unscanned forward and backward vertices, but if we have an unscanned forward vertex
$u$ and an unscanned backward vertex $z$, what do we do if $u > z$?  One answer is to
(temporarily) bypass one of them, but which one?  To resolve this dilemma, we use a
soft threshold $s$ that is an estimate of the stopping threshold for the search.  If
$u > z$, at least one is on the wrong side of $s$: either $u \ge s$ or $z \le s$ (or
both).  We call such a vertex {\em far}.  Our decision rule is to bypass a far
vertex.

In addition to $s$, we need two hard thresholds to bound candidate vertices, a low
threshold $l$ and a high threshold $h$.  We also need a partition of the candidate
forward vertices into two sets, $A$ and $B$, and a partition of the candidate
backward vertices into two sets, $C$ and $D$.  The vertices in $B \cup D$ are far; we
call the vertices in $A \cup C$ {\em near} (they may or may not be far).  To do a
compatible search, initialize $l$ to $w$, $h$ to $v$, $s$ to $v$ or $w$ or any vertex
in between, $A$ to $\{w\}$, $C$ to $\{v\}$, and $B$, $D$, $X$, and $Y$ to empty.
Then repeat an applicable one of the following cases until $A \cup B$ or $C \cup D$
is empty (see Figure \ref{fig:c-search}): \\

\noindent {\bf Case 1f.} $A$ is empty: move all vertices in $B$ to $A$, set $l = s$,
let $s$ be a vertex in $A$, and set $s = \min\{s, h\}$.

\noindent {\bf Case 1b.} $C$ is empty: move all vertices in $D$ to $C$, set $h = s$,
let $s$ be a vertex in $C$, and set $s = \max\{s, l\}$. \\

\noindent In the remaining cases $A$ and $C$ are non-empty.  Choose a vertex $u$ in
$A$ and a vertex $z$ in $C$. \\

\noindent {\bf Case 2f.} $u \ge h$: delete $u$ from $A$.

\noindent {\bf Case 2b.} $z \le l$: delete $z$ from $C$.

\noindent {\bf Case 3f.} $\max\{z, s\} \le u < h$: move $u$ from $A$ to $B$.

\noindent {\bf Case 3b.} $l < z \le \min\{u, s\}$: move $z$ from $C$ to $D$. \\

\noindent In the remaining cases $l \le u < z \le h$. \\

\noindent {\bf Case 4f.} All arcs out of $u$ are traversed: move $u$ from $A$ to $X$.

\noindent {\bf Case 4b.} All arcs into $z$ are traversed: move $z$ from $C$ to $Y$.

\noindent {\bf Case 5.} No other case applies: traverse an untraversed arc $(u, x)$ out of $u$
and an untraversed arc $(y, z)$ into $z$.  If $x$ is backward or $y$ is forward, stop
and report a cycle.  If $x$ is unreached, make it forward and add it to $A$; if $y$
is unreached, make it backward and add it to $C$. \\

\begin{figure}
\centering
%\vspace{2.5cm}
\includegraphics[scale=0.5]{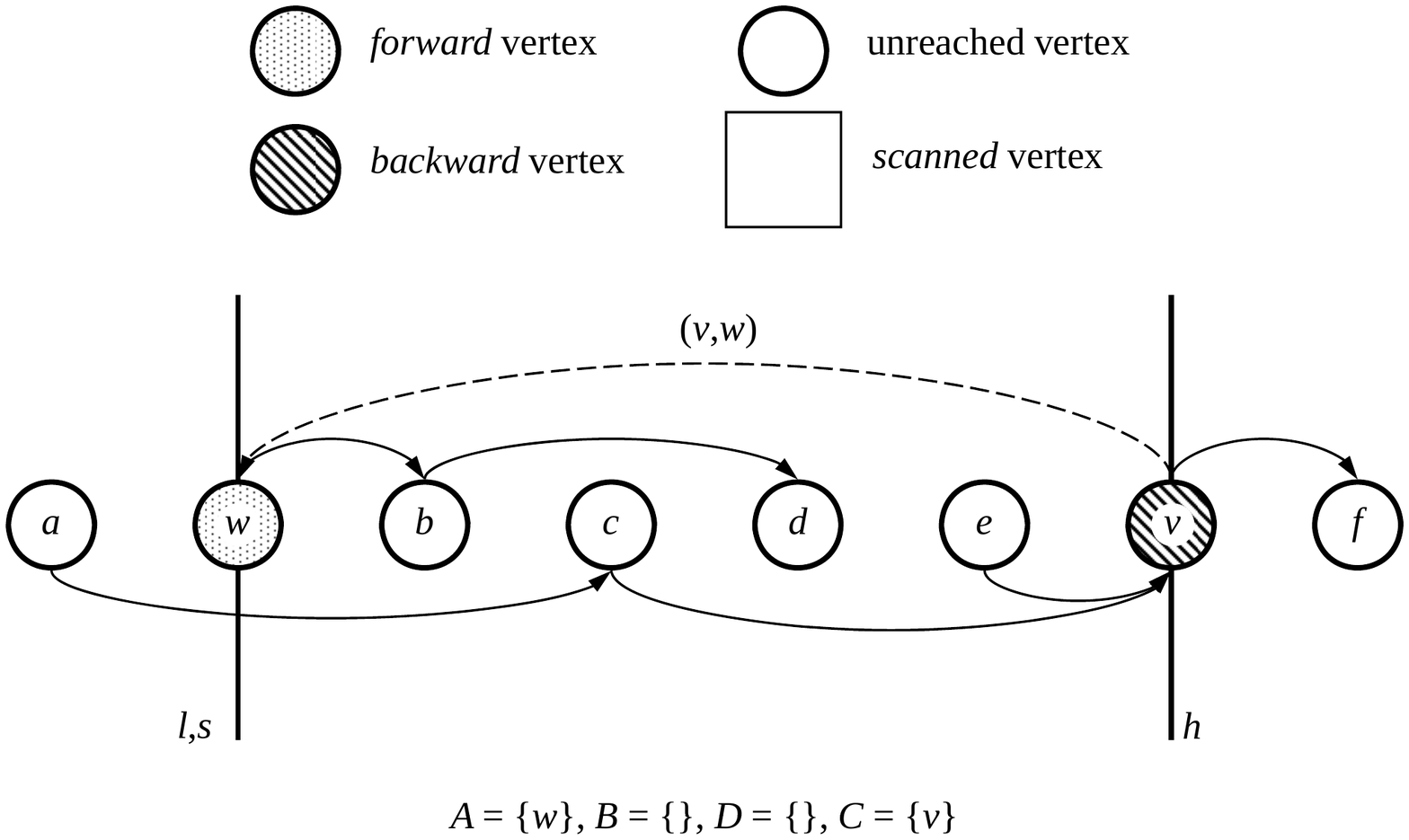} \\
\vspace{0.1in}
\includegraphics[scale=0.5]{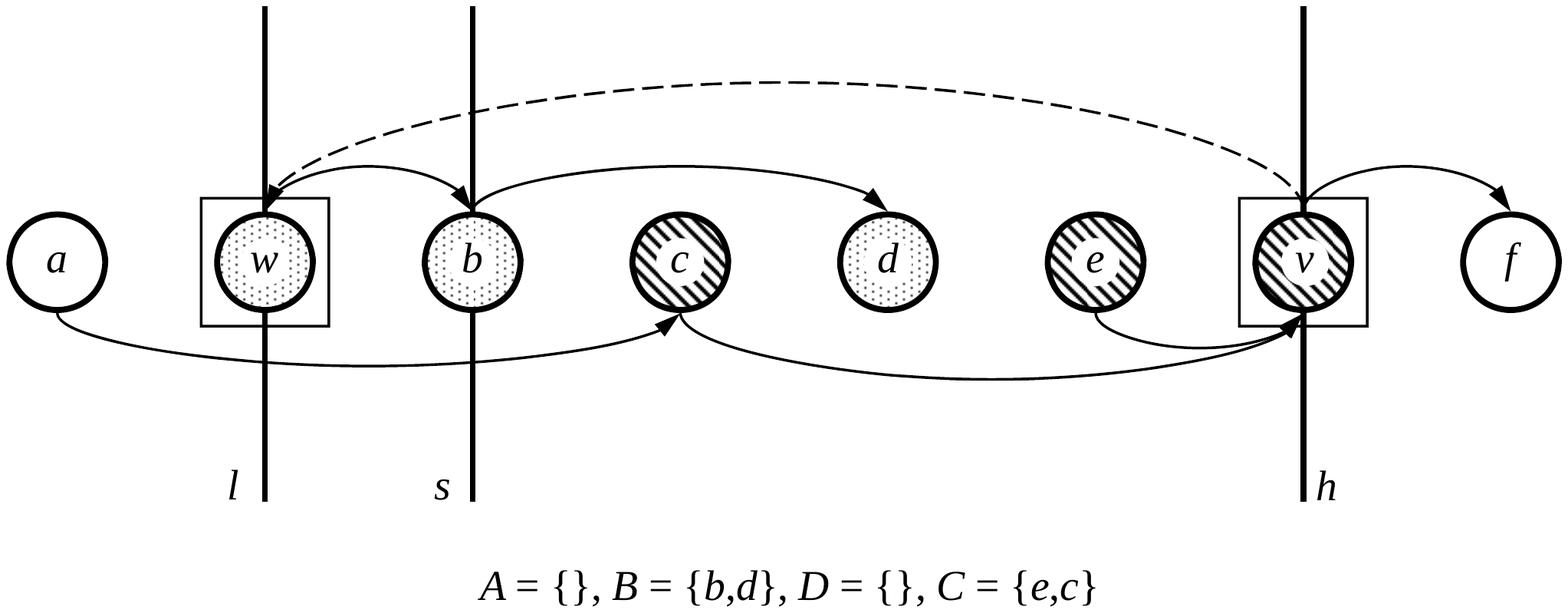} \\
\vspace{0.1in}
\includegraphics[scale=0.5]{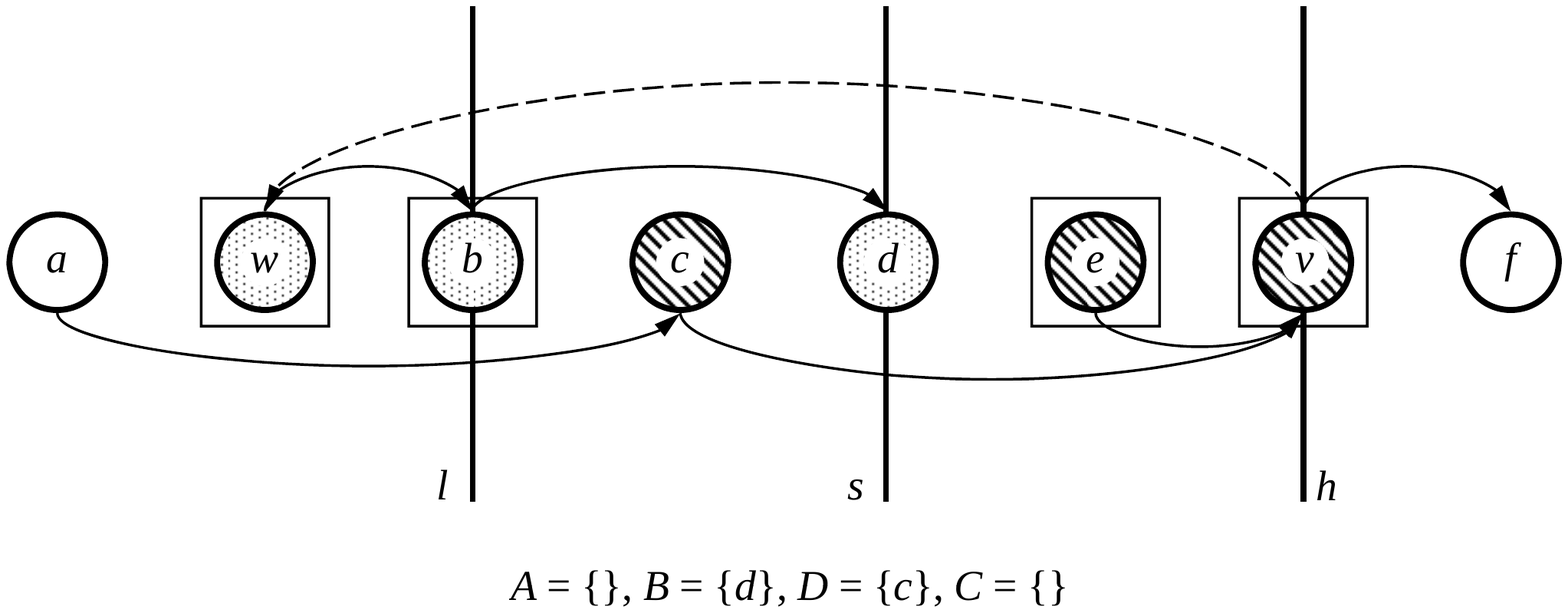} \\
\vspace{0.1in}
\includegraphics[scale=0.5]{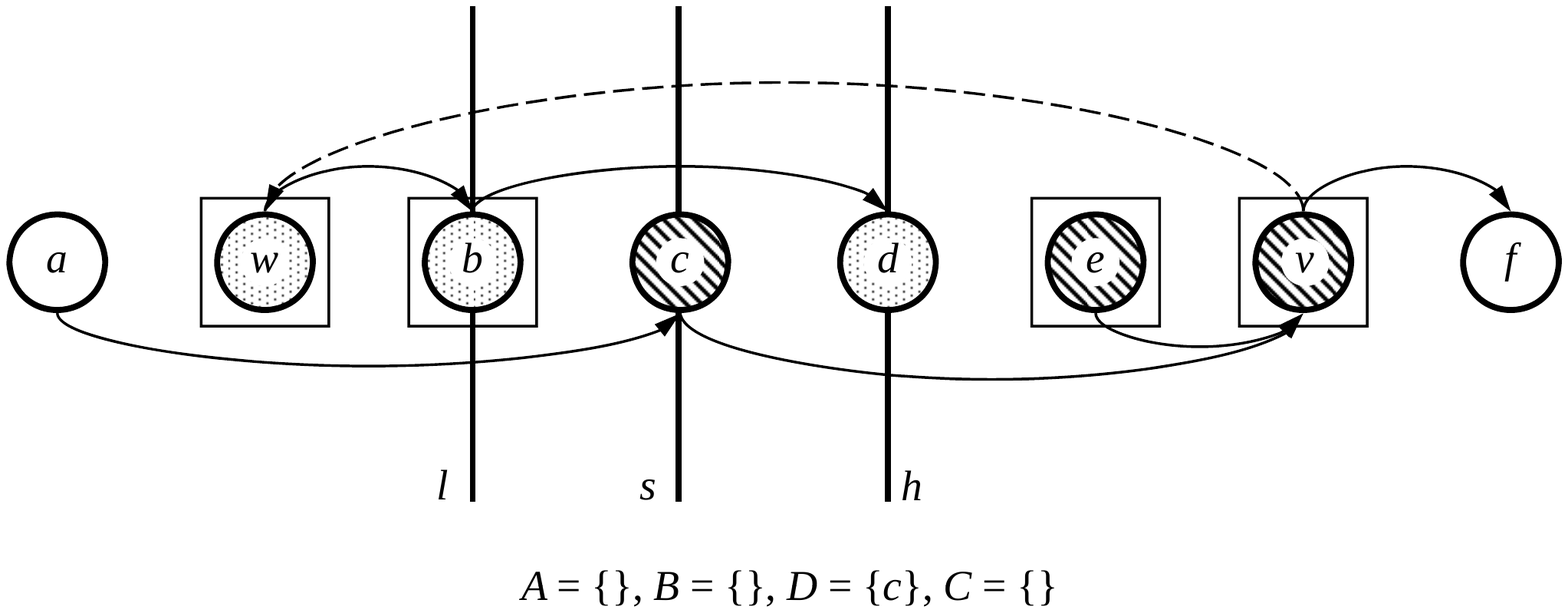} \\
\caption{Compatible search via a soft threshold. At the end of the search, $A \cup B$
is empty, so $s = h = d$ is the final threshold and the vertices are reordered as
described in Section \ref{sec:bi-search}.}
\label{fig:c-search}
\end{figure}

If the search stops without detecting a cycle, set $s = h$ if $A \cup B$ is
empty, $s = l$ otherwise.  Delete from $X$ all forward vertices no less than $s$ and from $Y$
all backward vertices no greater than $s$.  Reorder the vertices as described in
Section \ref{sec:bi-search}.

This algorithm does some deletions and bypasses lazily that could be done more
eagerly.  In particular, when a vertex is moved from the far set to the near set in
Case 1f or 1b, it can be deleted if it is on the wrong side of the corresponding hard
threshold (by applying Case 2f or 2b immediately), and it can be left in the far set
if it is on the wrong side of the new soft threshold.  The algorithm does traversal
steps eagerly: it can do such a step even if one of the two vertices involved is far,
since all that Case 5 requires is compatibility.  The results below apply to the
original algorithm and to variants that do deletions and bypasses more eagerly and
traversal steps more lazily.

\begin{theorem} \label{thm:c-search-cor}

Compatible search with a soft threshold is correct.

\end{theorem}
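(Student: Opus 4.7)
The plan is to reduce the correctness of compatible search with a soft threshold to Theorem \ref{thm:bi-search-cor}, by showing that the refined algorithm is an instance of the bidirectional search method of Section \ref{sec:bi-search} in which every traversal step (Case 5) is compatible. Two things must therefore be verified: (i) whenever Case 5 fires we have $u < z$, so the compatibility requirement holds; and (ii) at termination the sets $X$ and $Y$, after the final trim at $s$, coincide with the sets the general method requires for reordering.

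The technical core is a set of loop invariants, proved by induction over iterations: $l \le s \le h$; $A \cup B$ consists of unscanned forward vertices and $C \cup D$ of unscanned backward vertices; $A \subseteq [l,\infty)$, $B \subseteq [s,\infty)$, $C \subseteq (-\infty,h]$, and $D \subseteq (-\infty,s]$; and every unscanned forward vertex of order less than $h$ lies in $A \cup B$, with the symmetric statement for backward vertices and $l$. Cases 2, 4, and 5 preserve these in a direct way; Cases 3f and 3b transfer a vertex across the partition, and the case hypothesis guarantees the required threshold bound on the receiving far side. The delicate step is Cases 1f and 1b, where a hard threshold, the soft threshold, and the partition all change simultaneously. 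In Case 1f the new $l$ is set to the old $s$, and every vertex just transferred from $B$ into $A$ already had order at least the old $s$, so $A \subseteq [l,\infty)$ is restored with the new $l$; Case 1b is symmetric.

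Granted the invariants, compatibility in Case 5 is nearly immediate: the failure of Cases 2f, 2b, 3f, 3b yields, respectively, $u < h$, $z > l$, $u < \max\{z,s\}$, and $z > \min\{u,s\}$, and these four inequalities together force $l \le u < z \le h$. Termination follows from a simple potential counting untraversed arcs and unscanned vertices, together with the observation that Cases 1f and 1b monotonically shrink the interval $[l,h]$. When the loop halts with $A \cup B$ empty and $s$ set to $h$, the invariants imply that every forward vertex of order less than $s$ has been scanned; a short case chase shows that each such vertex must have entered $X$ via Case 4f, since Case 2f cannot delete a vertex whose order is below every past value of $h$. The trimming step then discards exactly those elements placed in $X$ or $Y$ while a hard threshold was still on the far side of the final $s$, leaving the sets required by Section \ref{sec:bi-search}; the cycle-detection branch of Case 5 coincides with the general method's. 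The hypotheses of Theorem \ref{thm:bi-search-cor} are thus satisfied and correctness transfers. The main obstacle I anticipate is not deep but bookkeeping-heavy, centered on Cases 1f and 1b, where several quantities move in concert.
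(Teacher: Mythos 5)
Your proposal takes essentially the same route as the paper: reduce to Theorem~\ref{thm:bi-search-cor} by observing that compatible search is an instance of the general bidirectional method, then check that when the loop halts with $A\cup B$ (resp.\ $C\cup D$) empty, the general stopping condition is satisfied with $s=h$ (resp.\ $s=l$). The paper dispatches this in three sentences; you supply the supporting invariants ($l\le s\le h$, $A\subseteq[l,\infty)$, $B\subseteq[s,\infty)$, $C\subseteq(-\infty,h]$, $D\subseteq(-\infty,s]$, and the completeness of $A\cup B$ and $C\cup D$ as the unscanned candidates within the hard thresholds) and verify them case by case, which is exactly the bookkeeping the paper leaves implicit. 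Two small remarks. First, the derivation that Case~5 only fires with $u<z$ is not actually needed for correctness; compatibility matters only for the running-time bound (Lemma~\ref{lem:c-search-arcs}), since the general method of Section~\ref{sec:bi-search} places no ordering requirement on traversal steps, so this part of your argument is correct but superfluous for Theorem~\ref{thm:c-search-cor}. Second, after stating the invariants you only explicitly close out the \emph{forward} half of the stopping condition (``every forward vertex less than $s=h$ is scanned''); the symmetric claim you actually need is about backward vertices greater than $h$, not greater than $l$, and it follows from $C\cup D\subseteq(-\infty,h]$ together with the fact that Case~2b only deletes vertices $z\le l\le h$ and $h$ is nonincreasing. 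Your invariants do yield this, but the ``symmetric statement for $l$'' phrasing doesn't quite match what must be chained together, so it is worth spelling out. The termination remark (``Cases 1f and 1b monotonically shrink $[l,h]$'') is only a weak monotonicity and by itself does not rule out cycling between Cases~1 and~3; the paper sidesteps this by proving termination only as a byproduct of the running-time bound for the specific median/random threshold choices.
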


\begin{proof}

Compatible search is just a special case of the general method presented in Section
\ref{sec:bi-search}.  If the search stops with $A \cup B$ empty, then all
forward vertices less than $h$ are scanned; hence the stopping condition holds for $s =
h$.  The case of $C \cup D$ empty is symmetric.  \qed

\end{proof}

\begin{lemma}

The running time of compatible search with a soft threshold is $O(1)$ plus $O(1)$ per
arc traversed plus $O(1)$ for each time a vertex becomes near.

\end{lemma}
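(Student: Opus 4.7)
The plan is to charge the total work of a compatible search to three buckets: a fixed $O(1)$ overhead, $O(1)$ per arc traversed (Case 5), and $O(1)$ per event in which a vertex becomes near, where \emph{becoming near} means being added to $A$ or $C$, either at initialization, as a newly discovered unreached vertex in Case 5, or by being moved from $B$ to $A$ or from $D$ to $C$ in Case 1f or 1b.

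First I would observe that Cases 2f, 2b, 3f, 3b, 4f, and 4b each take $O(1)$ time and remove one vertex from its current near set $A$ or $C$. A vertex that is currently near cannot leave $A$ or $C$ except through one of these cases, so each becoming-near event is followed by at most one such exit and can absorb its $O(1)$ cost. Case 5 does $O(1)$ work plus one arc traversal, which fits directly into the per-arc bucket; any addition of $x$ to $A$ or $y$ to $C$ in this case counts as a fresh becoming-near event whose own $O(1)$ allowance covers the bookkeeping.

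The main step is analyzing the bulk operations in Cases 1f and 1b. I would note that Case 1f fires only when $A$ is empty, and since the search terminates as soon as $A\cup B$ is empty, Case 1f implies $|B|\ge 1$. Its work consists of moving every vertex of $B$ into $A$, setting $l=s$, selecting some vertex of $A$ as the new $s$, and clipping $s$ by $h$, for a total of $O(|B|)$ time. Each of the $|B|$ moved vertices is becoming near at this moment (transitioning from the far set to the near set), so the move cost is $O(1)$ per becoming-near event, and the $O(1)$ fixed work of the case is absorbed by any one such event. Case 1b is symmetric. The post-search cleanup (choosing the final $s$ and pruning $X$ and $Y$) touches only vertices that had already been moved into $X$ or $Y$ by Case 4, and so is charged to those vertices' earlier becoming-near events.

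The only subtlety worth flagging is that a vertex may become near several times during one search, since Case 3f can send a near vertex back into $B$, after which Case 1f may later restore it to $A$; the lemma is therefore phrased in terms of becoming-near \emph{events} rather than distinct vertices. Summing the three buckets over the entire search yields the claimed bound.
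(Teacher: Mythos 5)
Your proposal is correct and follows essentially the same accounting as the paper's proof: you classify each case by which bucket absorbs its $O(1)$ cost (arc traversal for Case~5, a becoming-near event for the bulk moves in Cases 1f/1b and for the per-vertex exits in Cases 2--4), exactly as the paper does when it partitions cases into arc-traversals, permanent deletions from $A\cup C$, moves to $B\cup D$, and moves back to $A\cup C$, with the latter two bounded by becoming-near events. Your write-up is somewhat more explicit than the paper's one-paragraph argument (in particular you spell out why Case~1f does $O(|B|)$ work with $|B|\ge 1$ and you flag the multiple-becoming-near subtlety), but there is no substantive difference in method.
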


\begin{proof}

Each case either traverses two arcs and adds at most two vertices to $A \cup C$,
or permanently deletes a vertex from $A \cup C$, or moves a vertex from $A \cup
C$ to $B \cup D$, or moves one or more vertices from $B \cup D$ to $A
\cup C$.  The number of times vertices are moved from $A \cup C$ to $B \cup
D$ is at most the number of times vertices become near. \qed

\end{proof}

The algorithm is correct for any choice of soft threshold, but only a careful choice
makes it efficient.  Repeated cycling of vertices between near and far is the
remaining inefficiency.  We choose the soft threshold to limit such cycling no matter
how the search proceeds.  A good deterministic choice is to let the soft threshold be
the median or an approximate median of the appropriate set ($A$ or $C$); an
$\epsilon$-{\em approximate median} of a totally ordered set of $g$ elements is any
element that is no less than $\epsilon g$ and no greater than $\epsilon g$ of the
elements, for some constant $\epsilon > 0$. The median is a $1/2$-approximate
median. Finding the median or an approximate median takes $O(g)$ time
\cite{Blum1973,Schonhage1976}.  An alternative is to choose the soft threshold
uniformly at random from the appropriate set.  This gives a very simple yet efficient
randomized algorithm.

\begin{lemma} \label{lem:soft-near}

If each soft threshold is an $\epsilon$-approximate median of the set from which it
is chosen, then the number of times a vertex becomes near is $O(1)$ plus $O(1)$ per
arc traversed.  If each soft threshold is chosen uniformly at random, then the
expected number of times a vertex becomes near is $O(1)$ plus $O(1)$ per arc
traversed.

\end{lemma}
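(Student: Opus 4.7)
The plan is to classify the moments when a vertex becomes near and then bound the cumulative size of the batches that Case~1f and Case~1b sweep from the far sets back into the near sets.

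First, I would observe that a vertex joins $A \cup C$ in exactly three circumstances: at initialization ($w$ into $A$, $v$ into $C$), by Case~5 when an arc traversal discovers an unreached vertex, or by the batch operations Case~1f and Case~1b. The first two contribute $O(1)$ plus $O(1)$ per arc traversed, so the entire task reduces to bounding the cumulative batch sizes moved by Case~1f and its symmetric counterpart Case~1b. By symmetry I focus on Case~1f.

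Next, let $n_i$ denote $|B|$ immediately before the $i$-th firing of Case~1f, so that firing generates exactly $n_i$ near events; the goal is to show $\sum_i n_i = O(T + 1)$, where $T$ is the total number of arcs traversed. I would derive a geometric-decay recurrence as follows. Immediately after the $i$-th firing, $A$ contains precisely the $n_i$ vertices just moved back from $B$, and the new soft threshold $s$ is an $\epsilon$-approximate median of $A$. The only way a vertex subsequently leaves $A$ for $B$ is through Case~3f, whose precondition $u \ge s$ is satisfied by at most a $(1-\epsilon)$-fraction of $A$ (with a small correction for the single element equal to $s$). Meanwhile, new vertices enter $A$ only through Case~5, contributing at most $t_i$ vertices in the interval, where $t_i$ counts the arcs traversed there. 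Hence $n_{i+1} \le (1-\epsilon)\, n_i + t_i + O(1)$, and summing via $\sum_{k \ge 0}(1-\epsilon)^k = 1/\epsilon$ yields $\sum_i n_i = O((1+T)/\epsilon)$, which is the desired bound.

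For the randomized version, I would repeat the argument with $s$ drawn uniformly at random from $A$. In expectation, at most $(n_i - 1)/2$ elements of $A$ lie strictly above $s$, so the conditional expectation of $n_{i+1}$ given the history is at most $n_i/2 + t_i + O(1)$; linearity of expectation and the same telescoping summation deliver the analogous bound in expectation.

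The main obstacle I anticipate is absorbing the additive $O(1)$ per firing in the recurrence, since a naive sum introduces a term proportional to the number of Case~1f firings, which a priori could grow as fast as $\sum_i n_i$ itself. My remedy is either to tighten the approximate-median specification so that the $+O(1)$ vanishes entirely, or to employ a short amortized argument with a potential $\Phi$ proportional to $|B|$, absorbing each per-firing constant against the preceding Case~3f event that placed a vertex into $B$ in the first place.
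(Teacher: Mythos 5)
The intuition you identify — that approximate-median selection forces geometric decay in batch sizes — is the same intuition driving the paper's proof, but your recurrence-based formalization has a real gap, and the paper's charging argument is what makes the intuition rigorous.

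The central problem is that your recurrence $n_{i+1} \le (1-\epsilon)\,n_i + t_i + O(1)$ does not account for firings of Case~1b between two consecutive firings of Case~1f. Immediately after the $i$-th Case~1f, $s$ is an $\epsilon$-approximate median of the $n_i$ vertices just moved into $A$, so at most $(1-\epsilon)n_i$ of them strictly exceed $s$ and at most $(1-\epsilon)n_i$ are strictly below $s$. But if Case~1b fires before the next Case~1f, $h$ drops to the old $s$ and $s$ itself drops to a value chosen from the backward side; after that point the vertices of the batch that lie strictly \emph{below} the old $s$ can satisfy the Case~3f condition $u \ge s$ and migrate into $B$. Those vertices are disjoint from the ones that migrated earlier (when $s$ was the approximate median), so in the worst case the batch contributes up to roughly $2(1-\epsilon)n_i$ vertices to $n_{i+1}$, which yields no decay for any admissible $\epsilon \le 1/2$. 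Your argument implicitly fixes $s$ over the entire inter-firing interval, which is exactly what the algorithm does not guarantee. The paper sidesteps this by not arguing locally between consecutive Case~1f firings at all: it charges each batch of $g$ newly near vertices to the at least $\epsilon g$ vertices of that batch that become permanently ineligible when the \emph{next} change of $s$ occurs — whichever of Case~1f (raising $l$ to the old $s$, killing the low side) or Case~1b (lowering $h$ to the old $s$, killing the high side) happens first. Since each vertex dies at most once, each vertex absorbs at most $1/\epsilon$ units of charge plus one for its initial reach, giving the bound directly without any recurrence.

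Your proposed remedies for the $+O(1)$ per firing also do not close the gap. Tightening the approximate-median specification (to avoid the single boundary element equal to $s$) could indeed eliminate that particular additive constant, but it does nothing about the Case~1b interleaving above. The potential $\Phi \propto |B|$ is circular as stated: the potential increases by one at every Case~3f event, and the number of Case~3f events is controlled by how often vertices become far, which is of the same order as the quantity you are trying to bound. You would need an independent handle on the number of far transitions — which is essentially what the paper's permanent-death charging provides — before a potential argument could absorb the per-firing constants.
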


\begin{proof}

The value of $l$ never decreases as the algorithm proceeds; the value of $h$ never
increases.  Let $k$ be the number of arcs traversed.  Suppose each soft threshold is
an $\epsilon$-approximate median.  The first time a vertex is reached, it becomes
near.  Each subsequent time it becomes near, it is one of a set of $g$ vertices that
become near, as a result of being moved from $B$ to $A$ or from $D$ to $C$.  The two
cases are symmetric; consider the former.  We shall show that no matter what happens
subsequently, at least $\epsilon g$ vertices have become near for the last time.
Just after $s$ is changed, at least $\epsilon g$ vertices in $A$ are no less than
$s$, and at least $\epsilon g$ vertices in $A$ are no greater than $s$.  Just before
the next time $s$ changes, $l \ge s$ or $h \le s$.  In the former case, all vertices
no greater than $s$ can never again become near; in the latter case, all vertices no
less than $s$ can never again become near.  We charge the group of $g$ newly near
vertices to the vertices that become near for the last time.  The total number of
times vertices can become near is at most $(1 + 1/\epsilon)(2 + k)$, where $k$ is the
number of arcs traversed: there are at most $2 + k$ forward and backward vertices and
at most $1 + 1/\epsilon$ times a vertex can become near per forward or backward
vertex.

Essentially the same argument applies if the soft threshold is chosen uniformly at
random.  If a set of $g$ vertices becomes near, the expected number that become near
for the last time is at least $\sum_{1 \le i \le g/2} (2i)/g = (g/2 + 1)/2 > g/4$ if
$g$ is even, at least $\lceil g/2 \rceil + \sum_{1 \le i < \lfloor g/2 \rfloor}
(2i)/g = {\lceil g/2 \rceil}^2/g > g/4$ if $g$ is odd.  The total expected number of
times vertices can become near is at most $(5/4)(2 + k)$. \qed

\end{proof}

\begin{theorem} \label{thm:c-search-run}

The amortized time for incremental topological ordering via compatible search is
$O(m^{1/2})$ per arc addition, worst-case if each soft threshold is an
$\epsilon$-approximate median of the set from which it is chosen, expected if each
soft threshold is chosen uniformly at random.

\end{theorem}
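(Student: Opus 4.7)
The plan is to chain together the three preceding lemmas, handling the deterministic $\epsilon$-approximate-median and randomized versions in parallel. The running-time lemma for a single compatible search already gives cost $O(1)$ plus $O(1)$ per arc traversed plus $O(1)$ per vertex that becomes near, provided we can charge the cost of picking each soft threshold to one of these terms. First I would dispose of the threshold-selection cost: Case 1f is entered only when $A$ is empty, so after moving $B$ into $A$ the new set $A$ is exactly the set of vertices that just became near in this step. Hence a linear-time $\epsilon$-approximate median computation \cite{Blum1973,Schonhage1976} costs $O(1)$ per newly-near vertex, which is already absorbed by the per-near-vertex term. Case 1b is symmetric. For the randomized version, storing $A$ and $C$ as resizable arrays with swap-and-pop deletion lets us draw a uniformly random element in $O(1)$ worst-case time, so threshold selection contributes only $O(1)$ per application of Case 1f or 1b.

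Next I would invoke Lemma \ref{lem:soft-near} to collapse the ``becomes near'' count itself into $O(1) + O(\text{arcs traversed})$, worst case for the $\epsilon$-median rule and in expectation for the uniform-random rule. Substituting into the single-search bound reduces the cost of a search to $O(1) + O(\text{arcs traversed})$, interpreted worst-case or in expectation according to which version of the algorithm is used.

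Finally I would sum over the $m$ arc additions and apply Lemma \ref{lem:c-search-arcs}, which yields $O(m^{1/2})$ amortized arc traversals per addition and subsumes the additive $O(1)$ per search. The one loose end is the single search that may report a cycle: as noted right after Lemma \ref{lem:bi-search-run}, that search can cost $\Theta(m)$ plus overhead, but distributed over the $m$ additions it contributes only $O(1)$ amortized. Combining these pieces gives the stated $O(m^{1/2})$ amortized bound per arc addition, worst-case for approximate medians and expected for random selection.

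I expect the main obstacle to be the bookkeeping around the approximate-median cost. It works cleanly for the pseudocode as written, because $s$ is recomputed only when the near set has just been refilled from empty, so the median-finding cost matches the count of newly-near vertices in this step one-to-one. However, one should verify that the eager variants mentioned after the case listing---which may delete or re-classify freshly promoted vertices before the median is actually computed---still satisfy that the selection cost is dominated by the number of ``becomes near'' events they induce, so that Lemma \ref{lem:soft-near} continues to absorb it.
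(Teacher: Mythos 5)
Your proposal is correct and follows exactly the same route as the paper, which dispatches this theorem with the one-line ``Immediate from Lemmas \ref{lem:bi-search-run}--\ref{lem:soft-near}.'' You have simply made explicit the chaining that the paper leaves to the reader: charging median-selection to the newly-near vertices (since Case 1f/1b refills $A$ or $C$ from empty, the linear-time selection is $O(1)$ per vertex becoming near, and this is already implicitly absorbed by the unnamed lemma preceding Lemma \ref{lem:soft-near}), folding in the one terminal search that may detect a cycle, and noting that the bounds hold worst-case or in expectation according to the threshold rule. None of this changes the structure of the argument; it is a faithful, more detailed rendering of the paper's proof.
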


\begin{proof}

Immediate from Lemmas \ref{lem:bi-search-run}-\ref{lem:soft-near}. \qed

\end{proof}
\section{Implementation} \label{sec:impl}

In this section we fill in some implementation details.  We also give alternative
implementations of dynamic ordered lists and the reordering step.

We need a way to keep track of traversed and untraversed arcs.  We maintain each
incidence set as a singly-linked list, with a pointer indicating the first
untraversed arc.  Each time a vertex is reached during a search, the pointer for its
appropriate incidence list is reset to the front.  Since a vertex cannot be both
forward and backward during a single search, one pointer per vertex suffices.

We also need a way to report cycles that are detected, if this is required of the
application.  For this purpose we maintain, for each forward and backward vertex
other than $v$ and $w$, the arc by which it is reached.

The description of compatible search is quite general; in particular, it does not
specify how to maintain the sets of candidate vertices, or the order in which to
consider candidates.  A simple implementation is to store the candidate forward
vertices in a deque \cite{Knuth1973} (double-ended queue) $F$, with vertices in $A$
preceding those in $B$ and a pointer to the first vertex in $B$.  The first vertex on
$F$ is $u$. New forward vertices are added to the front of $F$; bypassed far vertices
are moved from the front to the back of $F$.  When the pointer to the first vertex in
$B$ indicates the first vertex on $F$, this pointer is moved to the back, and $l$ and
$s$ are updated.  A deque $R$ that operates in the same way stores the backward
vertices.  Both $F$ and $R$ are actually steques \cite{Kaplan1995} (stack-ended
queues or output-restricted deques \cite{Knuth1973}), since deletions are only from
the front.  Thus each can be implemented as an array or as a singly-linked, possibly
circular list.

The dynamic ordered list implementations of Dietz and Sleator \cite{Dietz1987} and of
Bender et al. \cite{Bender2002} are two-level structures that store the elements, in
our case vertices, in contiguous blocks of up to $b$ elements, with $b = O(\log n)$,
and store the blocks in a doubly-linked list.  The blocks have numbers that are
consistent with the list order, higher numbers for later blocks.  The elements within
a block have numbers that are consistent with the order within the block.  Thus each
element has a two-part number that can be retrieved in $O(1)$ time and used to test
order in $O(1)$ time, worst-case.  The two methods differ in the details of
renumbering when insertions take place.  Deletions require no explicit renumbering
and take $O(1)$ time worst-case. Insertion takes $O(1)$ amortized time.  This bound
becomes worst-case if incremental updating is done.

For us an amortized bound suffices.  Even though the renumbering schemes of Dietz and
Sleator and of Bender et al. are not too complicated, we can use a simpler two-level
structure if we are willing to suffer an $O(n^{1/2})$ additive overhead per arc
addition, which does not affect the asymptotic bound.  We use a block size of
$O(n^{1/2})$, we completely renumber the elements within a block when the block
contents change other than by a deletion, and we completely renumber the blocks when
a block is inserted or deleted.  We number the blocks, and the elements within a new
block, with consecutive integers starting from 1.  When a block becomes less than
half full as a result of deletions, we combine it with one of its neighbors.  With
this method the time for a set of $k$ consecutive insertions is $O(n^{1/2} + k)$, and
the amortized time per deletion is $O(1)$.  This data structure is essentially a
two-level B-tree.  Using more levels with a smaller bound on block size decreases the
additive overhead per arc addition but increases the constant factor for order
queries.

We can topologically order the subgraphs induced by $X$ and $Y$ by using either of
the two linear-time algorithms for static topological ordering mentioned in Section
\ref{sec:intro}.  The subgraph induced by $X$ contains exactly the vertices less than
$s$ that are reachable from $w$.  Thus a simple method to order $X$ is to do a
depth-first search forward from $w$, traversing arcs only from vertices less than
$s$, and order the reached vertices less than $s$ in decreasing postorder
\cite{Tarjan1972}.  A symmetric depth-first search backward from $v$ orders $Y$.  An
alternative to using a static topological ordering method is to sort $X$ and $Y$ on
the current vertex order.  If we use binary comparisons, the time to sort is $O(k\log
k)$, where $k$ is the number of vertices sorted.  This method has the disadvantage
that it does not preserve the $O(m^{1/2})$ amortized bound per arc addition but
incurs a logarithmic overhead.  Alternatively we can take advantage of the vertex
numbering used by the dynamic ordered list structure.  Since the number of bits
needed to represent the vertex numbers is $O(\log n)$, we can sort in $O(n^{1/2} +
k)$ time via a radix sort with a fixed number of passes; the additive term can be
decreased at the expense of increasing the constant factor.  Radix sorting preserves
the $O(m^{1/2})$ bound per arc addition.

Which implementation of the algorithm is fastest in practice is a question to be
resolved by experiments.

\section{Maintenance of Strong Components} \label{sec:strong}

A natural extension of our topological ordering algorithm is to the problem of
maintaining strong components, and a topological order of them, as arcs are added.
This problem has received much less attention than the topological ordering problem.
Pearce \cite{Pearce2005} and Pearce et al. \cite{Pearce2003b} mention extending their
topological ordering algorithm and that of Marchetti-Spaccamela et
al. \cite{Marchetti1996} to the strong components problem, but they provide very few
details.  We shall describe how to extend the general method of Section
\ref{sec:bi-search} and the specific method of Section \ref{sec:c-search} to the
strong components problem, while maintaining for the latter the $O(m^{1/2})$
amortized time bound per arc addition.  We describe only the additions and changes
needed.  We begin at a high level and then fill in the implementation details.

The graph of strong components contains a vertex for each strong component, and, for
each original arc $(v, w)$, an arc $(c(v), c(w))$, where $c(x)$ is the vertex for the
strong component containing original vertex $x$.  The graph of strong components can
contain loops (arcs of the form $(x, x)$) and multiple arcs; we allow both, but
ignore loops when searching.  To maintain a topological order of the graph of strong
components (ignoring loops), we extend the method of Section \ref{sec:bi-search} as
follows:

To add an arc $(v, w)$, begin by finding $c(v)$ and $c(w)$.  Add $(c(v), c(w))$ to
the set of arcs out of $c(v)$ and to the set of arcs into $c(w)$.  If $c(v) > c(w)$,
do a bidirectional search forward from $c(w)$ and backward from $c(v)$, but ignore
loops and do not stop when a cycle is detected but allow vertices to be both forward
and backward.  Specifically, do the search as follows.  When about to traverse a
loop, put it aside instead of traversing it, so it will never be examined in future
searches.  When reaching a backward vertex $x$ by a forward traversal, do not stop,
but instead make $x$ forward. When reaching a forward vertex $x$ by a backward
traversal, do not stop, but instead make $x$ backward.  Call a vertex {\em forward
scanned} if all its outgoing arcs have been traversed, {\em backward scanned} if all
its incoming arcs have been traversed.  Continue the search until there is a vertex
$s$ such that all forward vertices less than $s$ are forward scanned and all backward
vertices greater than $s$ are backward scanned.

The addition of $(v, w)$ creates a new cycle, and a single new strong component, if
and only if at the end of the search some vertex is both forward and backward.  If
there is no such vertex, reorder the vertices as in Section \ref{sec:bi-search}.
Otherwise, let $X$ be the set of forward vertices less than $s$ and $Y$ the set of
backward vertices greater than $s$.  The vertex set of the new component is a subset
of $X \cup Y \cup \{s\}$.  Find the vertex set $Z$ of the new component and contract
the vertices in $Z$ to form a single new vertex $x$ representing the component.  Find
topological orders $O_X$ and $O_Y$ of the vertices in $X - Z$ and the vertices in $Y
- Z$.  Delete all vertices in $X$ and $Y$ from the vertex order.  If $s$ is in the
component, replace $s$ by $x$ in the order. (Insert $x$ after $s$ and delete $s$.)
Otherwise, insert $x$ just after $s$ if $s$ is not forward, just before $s$
otherwise.  Reinsert the vertices in $X - Z$ just after $x$ in order $O_X$ and insert
the vertices in $Y - Z$ just before $x$ in order $O_Y$.

A simple extension of the proof of Theorem \ref{thm:bi-search-cor} shows that this
algorithm is correct.  An arc can only become a loop once and be put aside once.  If
a new component is created, the time to find it is linear in the number of arcs
traversed by the search if one of the linear-time static algorithms for finding
strong components mentioned in Section \ref{sec:intro} is used.  (We describe a
simpler alternative below.)

We can use compatible search with a soft threshold in this algorithm, the only change
being to allow vertices to be both forward and backward.  The proof of Theorem
\ref{thm:c-search-cor} gives correctness.

Before analyzing the running time of the algorithm, we fill in some implementation
details.  We represent the vertex sets of the strong components using a disjoint set
data structure \cite{Tarjan1975,Tarjan1984}.  Each set has a distinguished vertex,
defined by the data structure, that represents the set.  Two operations are possible:

\begin{itemize}

\item $find(x)$: return the representative of the set containing $x$;

\item $link(x, y)$: given representatives $x$ and $y$ of two different sets, form their
union, and choose one of $x$ and $y$ to represent it.

\end{itemize}

Originally each vertex is in a singleton set.

We maintain the arcs in their original form, with their original end vertices.  For
each representative of a component, we maintain the set of arcs out of component
vertices and the set of arcs into component vertices.  These sets are singly linked
circular lists, so that catenation takes $O(1)$ time.  Each representative also has
two pointers, to the first untraversed forward arc and the first untraversed backward
arc.  To process an arc $(v, w)$ during a search, put $(v,w)$ aside (delete it from
the incidence list) if $find(v) = find(w)$; otherwise, traverse $(find(v),
find(w))$. To form a new strong component, combine representatives into a single
component by using $link$ to combine the corresponding vertex sets and catenating all
the forward incidence lists and all the backward incidence lists of the
representatives to form the incidence lists of the representative of the new
component.

With this data structure the time per $link$ is $O(1)$ \cite{Tarjan1975,Tarjan1984}.
Each catenation of arc lists takes $O(1)$ time per $link$.  The total time for all
the $link$s and all the catenations of arc lists over all arc additions is
$O(\min\{n, m\})$, which is $O(1)$ per arc addition.  There are two $find$s per arc
examined during the search. Such an arc is either found to be a loop and never
examined again, or it is traversed by the search. Lemmas
\ref{lem:bi-search-run}-\ref{lem:soft-near} and Theorem \ref{thm:c-search-run} hold
for the extension to strong components, not counting the time for $find$s.  The
number of $find$s is $O(m^{1/2})$ per arc addition by Lemma \ref{lem:c-search-arcs}.
Since the bound on the ratio of $find$s to $link$s is so high ($O(m^{1/2})$), the
amortized time for $find$s is $O(m^{1/2})$ per arc addition, even if the disjoint set
implementation uses path compression with na{\"i}ve linking \cite{Tarjan1984}; this
bound also holds for path compression with linking by rank or size
\cite{Tarjan1984}. The total amortized time per arc addition is $O(m^{1/2})$.

We conclude this section with three remarks about implementation.  First, there is a
simple way to topologically order $X$ and $Y$ and to find the new strong component
when one is created.  The method extends the topological ordering method mentioned in
Section \ref{sec:impl}.  Do a forward depth-first search from $w$, traversing arcs
only from vertices less than $s$.  When traversing an arc $(x, y)$, if $y$ is
backward, make $x$ backward.  When the search stops, the vertices in $X \cap Z$ are
the vertices reached by the search that are backward; decreasing postorder is a
topological order of $X - Z$.  A symmetric backward search from $v$ gives the
vertices in $Y \cap Z$ and topologically orders $Y - Z$.  Vertex $s$ is in the
component if it is both forward and backward.

Second, our representation of strong components does not provide a way to list the
vertices in each component.  To allow this, maintain for each component a circularly
linked list of the vertices in it: catenating such lists when a new component is
formed takes $O(1)$ time per $link$, and the vertices of a component can be listed in
$O(1)$ time per vertex by starting at the representative (or at any vertex) and
traversing the list.

Third, our method does not maintain a representation of the original graph. To
represent both the graph of strong components and the original graph, maintain, for
each vertex, four lists of its incident arcs: outgoing but not yet found to be loops;
outgoing loops; incoming but not yet found to be loops; incoming loops.  The first
and third lists are also part of the lists of outgoing and incoming arcs of the
component representative; when the end of a sublist is reached during a search, this
can be detected in $O(1)$ time by checking one end of the next arc.  Each arc is on
up to four lists, two of outgoing arcs and two of incoming arcs, but it only needs
two pointers to arcs next on a list.  (See Figure \ref{fig:strong}.)

\begin{figure}
\centering
\includegraphics[scale=0.5]{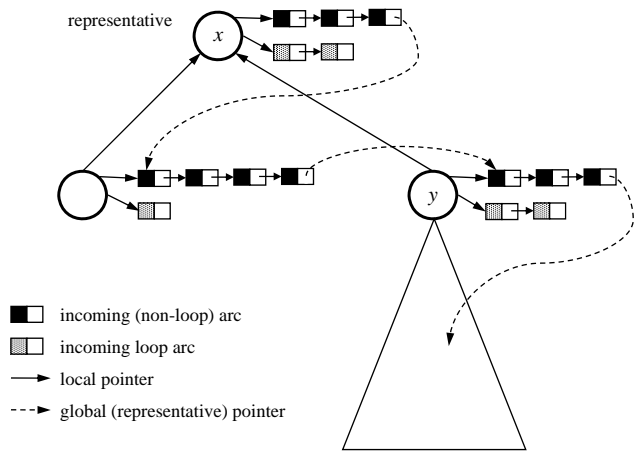} \\
\caption{Data structure for maintaining the graph of strong components and the
original graph. Only incoming arc lists are shown. Component $x$ and $y$ were
recently linked; all non-loop arcs in the new component are accessible from $x$.}
\label{fig:strong}
\end{figure}

\section{Lower Bounds and Other Issues} \label{sec:open}

We have described an incremental topological ordering algorithm that takes
$O(m^{1/2})$ amortized time per arc addition. In combination with the $O(n^{2.5}/m)$
bound of the second algorithm of Kavitha and Mathew, this gives an overall bound of
$O(\min\{m^{1/2}, n^{2.5}/m\})$ per arc addition. For $m/n = o(\log n)$, our bound is
asymptotically smaller than that of Kavitha and Mathew ($O(m^{1/2} + (n\log
n)/m^{1/2})$), by a logarithmic factor for $m/n = O(1)$. Although our approach is
based on that of Alpern et al. \cite{Alpern1990} and its variants, our method avoids
the use of heaps.  The randomized version of our algorithm is especially simple. Like
previous algorithms, our algorithm can handle arc deletions, since a deletion
preserves the validity of the current topological order, but the time bound is no
longer valid.  Katriel and Bodlaender provide a class of examples on which our
algorithm runs in $\Omega(m^{1/2})$ time per arc addition; thus our analysis is
tight.  We have extended the algorithm to the problem of maintaining strong
components and a topological order of strong components.  In this problem, arc
deletions are harder to handle, since one arc deletion can cause a strong component
to split into several smaller ones. Roditty and Zwick \cite{Roditty2002} presented a
randomized algorithm for maintaining strong components under arc deletions, but no
additions, given an initial graph.  The expected amortized time per arc deletion is
$O(n)$ and the query time is $O(1)$, worst-case.  Their algorithm does not maintain a
topological order of the vertices but can be easily modified to do so, with the same
bounds.  If both additions and deletions are allowed, there is no known solution
better than running an $O(m)$-time static algorithm after each graph change, even for
the simplest problem, that of cycle detection.  There has been quite a bit of work on
the harder problem of maintaining full reachability information for a dynamic graph.
See \cite{Roditty2002,Roditty2004}.

For the incremental topological ordering problem there are a couple of lower bounds,
but there are large gaps between the existing lower bounds and the upper bounds.
Ramalingam and Reps \cite{Ramalingam1994} gave a class of examples in which $n - 1$
arc additions force $\Omega(n\log n)$ vertices to be reordered, no matter what
topological order is maintained.  This is the only general lower bound.  Katriel
\cite{Katriel2004} considered what she called the {\em topological sorting problem},
in which the topological order must be maintained as an explicit map between the
vertices and the integers between $1$ and $n$.  For algorithms that only renumber
vertices within the affected region, she gave a class of examples in which $O(n)$ arc
additions cause $\Omega(n^2)$ vertices to be reordered.  The algorithms of
Marchetti-Spaccamela et al., Pearce and Kelly, Ajwani et al., and the $O(n^{2.5})$
algorithm of Kavitha and Mathew are all subject to this bound, although our algorithm
is not.

For topological ordering algorithms that reorder only vertices within the affected
region, we can obtain a lower bound of $\Omega(nm^{1/2})$ on the total number of
vertex reorderings in the worst case, assuming $m = \Omega(n)$.  For simplicity
assume that $m$ is a perfect square and $m^{1/2}$ evenly divides $n$.  Number the
vertices from 1 through $n$ in their initial topological order; use these fixed
numbers to permanently identify the vertices even as the order changes.  Begin by
adding arcs to form $m^{1/2}$ paths, each consisting of a sequence of $k = n/m^{1/2}$
consecutive vertices.  Then add arcs $(2k, 1), (3k, 1),\ldots,(n, 1), (3k, k+1), (4k,
k+1),\ldots,(n, k+1), (4k, 2k+1), (5k, 2k+1),\ldots,(n, 2k+1),$ and so on. The total
number of arcs added is $n - m^{1/2}$ to form the initial paths plus $m^{1/2}(m^{1/2}
- 1)/2$, totaling less than $m$ if $n \le m/2 + 3m^{1/2}/2$.  After the initial
paths are formed, each arc addition forces the reordering of $k$ vertices if the
reordering is only within the affected region, for a total of $\Omega(nm^{1/2})$
vertex reorderings.  The reordering is forced: each arc addition causes two paths to
change places.  (See Figure \ref{fig:lb}.)

\begin{figure}
\centering
\includegraphics[scale=0.5]{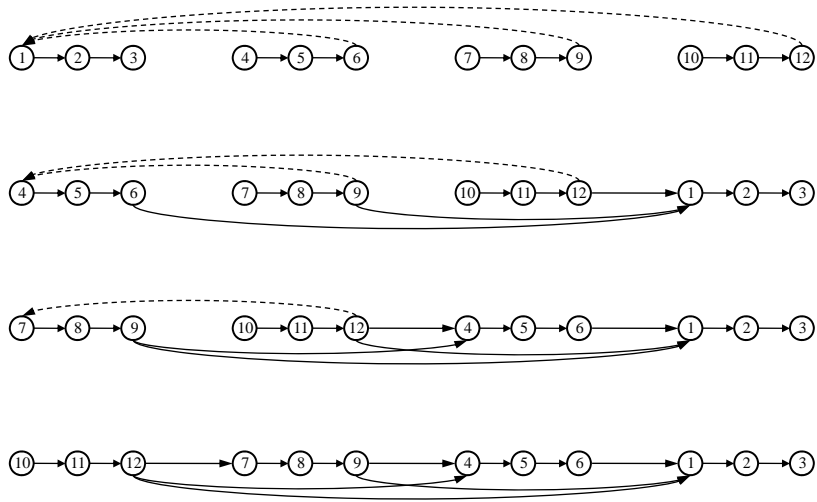} \\
\caption{The $\Omega(nm^{1/2})$ vertex reordering construction with $n = 12$, $m =
16$.}
\label{fig:lb}
\end{figure}

The bound of $\Omega(nm^{1/2})$ applies to all existing algorithms, including ours;
for sparse graphs, the running time of our algorithm matches this bound.  If we are
only interested in minimizing the number of vertex reorderings, not minimizing the
running time, we can get a matching upper bound of $O(nm^{1/2})$ on the number of
vertex reorderings by doing an ordered bidirectional search that alternates between
scanning a forward vertex and scanning a backward vertex; a count of related vertex
pairs gives the bound.

Another observation is that if the incident arc lists are sorted by end vertex, our
compatible search method can be modified so that the total search time over all arc
additions is $O(n^2)$: stop traversing arcs incident to a vertex when the next arc is
incident to a vertex outside the hard thresholds.  This bound, also, comes from a
count of related vertex pairs.  Unfortunately, keeping the arc lists sorted seems to
require more than $O(m^{3/2})$ time, giving us no actual improvement.  The
$O(n^{2.75})$-time algorithm of Ajwani et al. uses this idea but keeps the arc lists
partially sorted, trading off search time against arc list reordering time.

The running time analysis of Ajwani et al. and that of Kavitha and Mathew for their
$O(n^{2.5})$-time algorithm rely on a linear program to bound the total amount by
which vertex numbers change.  Although the solution to this linear program is
$\Theta(n^{2.5})$, it may not capture all the constraints of the problem, and Kavitha
and Mathew do not provide a class of examples for which their time bound is tight.
One would like such a class of examples, or alternatively a tighter analysis of their
algorithm.

We have used amortized running time as our measure of efficiency.  An alternative way
to measure efficiency is to use an incremental competitive model
\cite{Ramalingam1991}, in which the time spent to handle an arc addition is compared
against the minimum work that must be done by any algorithm, given the same current
topological order and the same arc addition.  The minimum work that must be done is
the minimum number of vertices that must be reordered, which is the measure that
Ramalingam and Reps used in their lower bound.  But no existing algorithm handles an
arc addition in time polynomial in the minimum number of vertices that must be
reordered.  To obtain positive results, some researchers have measured the
performance of their algorithms against the minimum sum of degrees of vertices that
must be reordered \cite{Alpern1990} or a more-refined measure that counts out-degrees
of forward vertices and in-degrees of backward vertices \cite{Pearce2006}.  For these
models, appropriately balanced forms of ordered search are competitive to within a
logarithmic factor \cite{Alpern1990,Pearce2006}.  In such a model, our algorithm is
competitive to within a constant factor.  We think, though, that such a model is
misleading: it does not account for the possibility that different algorithms may
maintain different topological orders, it does not account for correlated effects of
multiple arc additions, and good bounds have only been obtained for a model that may
overcharge the adversary.

Alpern et al. and Pearce and Kelly consider batched arc additions as well as single
arc additions.  We have not yet considered generalizing compatible search to handle
batched arc additions.  Doing so might lead improvements in practice, if not in
theory.

Our algorithm uses a vertex numbering scheme in which all vertices have distinct
numbers. Alpern et al. allowed vertices to be numbered the same if there is no path
between them, in an effort to minimize the number of distinct vertex numbers. Our
algorithm can be modified to include this idea, as follows. Add an extra level to the
dynamic list order structure: vertices are grouped into those of equal number; each
group is an element of a block in the doubly-linked list of blocks. (See Section
\ref{sec:impl}.) Start with all vertices in a single group. Having computed $X$ and
$Y$ after a search, delete all vertices in $X$ and $Y$ from their respective groups,
and delete all empty groups.  If $s$ is not forward, delete all forward vertices from
the group containing $s$ and add these vertices to $X$.  If $s$ is not backward,
delete all backward vertices from the group containing $s$ and add them to $Y$.  (If
$s$ is neither forward nor backward, do both.)  Assign each vertex in $X$ to a new
group corresponding to its maximum path length (in arcs) from $w$; assign each vertex
in $Y$ to a new group corresponding to its maximum path length (in arcs) to $v$.  If
$s$ is not forward, insert the new groups just after that of $s$, with the groups of
$Y$ first, in decreasing order of their maximum path length to $v$, followed by the
groups of $X$, in increasing order of their maximum path length from $w$.  Proceed
symmetrically if $s$ is not backward.  Computing maximum path lengths takes linear
time \cite{Tarjan1983}, so the overall time bound is unaffected.  This method extends
to the maintenance of strong components.  Whether this idea yields a speed-up in
practice is an experimental question.

%
% ---- Bibliography ----
%
\bibliography{local}
\bibliographystyle{abbrv}

\end{document}